\newtheorem{theorem}{Theorem}
\newtheorem{lemma}[theorem]{Lemma}
\newtheorem{corollary}[theorem]{Corollary}
\newtheorem{claim}[theorem]{Claim}
\begin{document}

\title{%
A Relation between the Protocol Partition 
Number and the Quasi-Additive Bound}
\author{Naoyuki Kamiyama\thanks{%
E-mail: {\ttfamily kamiyama@ise.chuo-u.ac.jp}.
}\\
Department of Information and System Engineering,\\
Chuo University, Japan.
}
\date{}

\maketitle

\begin{abstract}
\noindent
In this note, we show that the linear programming for 
computing the quasi-additive bound 
of the formula size of a Boolean function 
presented by Ueno [MFCS'10] is equivalent to 
the dual problem of the linear programming relaxation 
of an integer programming for computing
the protocol partition number. 
Together with the result of Ueno [MFCS'10], 
our results imply that 
there exists no gap between 
our integer programming for computing
the protocol partition number and 
its linear programming relaxation. 
\end{abstract}

\section{Introduction}

Proving lower bounds for a concrete computational model
is a fundamental problem in the computational complexity theory. 
In this note, we consider formula size lower bounds for 
a Boolean function. Karchmer and Wigderson~\cite{KW90} shown 
that the size of a smallest formula computing a Boolean function $f$ 
is equal to the {\it protocol partition number}
of the {\it communication matrix} arising from $f$. 
Karchmer, Kushilevitz and Nisan~\cite{KKN95}
formulated the problem of computing 
a lower bound for a protocol partition number 
as an integer programming problem and 
introduced a technique, called the {\it rectangle bound},
which gives a lower bound by showing a feasible solution of 
the dual problem of its linear programming relaxation. 
However, Karchmer, Kushilevitz and Nisan~\cite{KKN95}
also showed that this technique can not prove a lower bound 
larger than $4n^2$ for non-monotone formula size in general. 

Recently, Ueno~\cite{U10} introduced a novel technique, called 
the {\it quasi-additive bound}, which is inspired by the notion 
of {\it subadditive rectangle measures} presented by 
Hrube\v{s}, Jukna, Kulikov and Pudl\'{a}k~\cite{HJKP10}. 
Although the linear programming for computing the quasi-additive
bound can be seen as a simple extension of the linear programming
for computing the rectangle bound, Ueno~\cite{U10} showed that 
the quasi-additive bound can surpass the rectangle bound and 
it is potentially strong enough to give the matching formula
size lower bounds. 

In this note, we show that the linear programming for computing 
the quasi-additive bound of the formula size of a Boolean function 
presented by Ueno~\cite{U10} is equivalent to 
the dual problem of the linear programming relaxation 
of an integer programming for computing
the protocol partition number. 
Together with the result of Ueno~\cite{U10}, 
our results imply that 
there exists no gap between 
our integer programming for computing
the protocol partition number and 
its linear programming relaxation.
We hope that the results of this note help 
to understand why the quasi-additive bound is more 
powerful than the rectangle bound. 
Furthermore, to the best of our knowledge, no one 
studied an exact integer programming formulation for 
computing a protocol partition number. 
Thus, it may be of independent interests. 

\section{Preliminaries}

Let $\mathbb{R}$ and $\mathbb{N}$ be the sets of 
reals and non-negative integers, respectively. 
Given a vector $x$ on a ground set $U$, we use the 
notation $|x|=\sum_{u \in U}x_u$. 
A {\it relation} $T$ is a non-empty subset of $X\times Y \times Z$ 
for some finite sets $X$, $Y$ and $Z$. 
When we emphasize that a relation $T$ is a subset of 
$X\times Y \times Z$, 
we say that $T$ is a relation on $(X,Y,Z)$.  
In this note, we assume that for each relation $T$ on 
$(X,Y,Z)$ and $(x,y) \in X\times Y$ 
there exists $z \in Z$ such that $(x,y,z)\in T$. 

A {\it formula} is a binary tree with each leaf labeled
by a literal and each non-leaf vertex labeled by either 
of the binary connectives 
$\vee$ and $\wedge$. A literal is either a variable or 
its negation. The size of a formula is its
number of literals. 
For a Boolean function $f$, we define formula size $L(f)$ as the
size of a smallest formula computing $f$.

Karchmer and Wigderson~\cite{KW90} characterized the size of a smallest 
formula computing a Boolean function by using the notions 
of a {\it communication matrix} and a {\it protocol partition number}.
Suppose that we are given a relation $T$ on $(X,Y,Z)$. 
The communication matrix $M_T$ of $T$
is defined by a matrix whose rows and columns are indexed by 
$X$ and $Y$ respectively. 
Furthermore, each cell $(x,y) \in X \times Y$ 
of $M_T$ contains $z \in Z$ such that 
$(x,y,z) \in T$. 

A {\it  rectangle} 
of $M_T$ is a nonempty direct product 
$X' \times Y' \subseteq X\times Y$.
A  rectangle $X'\times Y'$
is called {\it monochromatic} if there exists  
$z \in Z$ such that $(x,y,z) \in T$ 
for all $(x,y) \in X'\times Y'$.
For a  rectangle 
$X' \times Y'$, a {\it partition} of $X'\times Y'$ is 
a pair of  rectangles 
$X'_1 \times Y'$ and $X'_2 \times Y'$ such that 
$X'=X'_1\cup X'_2$ and $X'_1 \cap X'_2=\emptyset$, 
or a pair of  rectangles 
$X' \times Y'_1$ and 
$X' \times Y'_2$ such that 
$Y'=Y'_1\cup Y'_2$ and $Y'_1 \cap Y'_2=\emptyset$. 

Suppose that we are give a set $\mathcal{R}$ of disjoint 
 rectangles. 
We say that $\mathcal{R}$ {\it recursively} partitions $M_T$ if 
$\cup_{R \in \mathcal{R}}R=M_T$ and there exists a {\it rooted binary 
tree representation} of $\mathcal{R}$ defined as follows. A 
vertex of this tree corresponds to some  
rectangle of $M_T$. Especially, the root vertex corresponds to $M_T$, and 
a leaf corresponds to a  rectangle in $\mathcal{R}$. 
For each non-leaf vertex $v$,  rectangles corresponding to 
its children consist of a partition of a  rectangle 
corresponding to $v$. 
Then, the size of a smallest set of disjoint monochromatic 
 rectangles which recursively partitions $M_T$ is
defied by $C^P(T)$, called the {\it protocol partition number} of $M_T$.

Given a Boolean function $f\colon \{0,1\}^n \to \{0,1\}$,
let $f^{-1}(1)$ (resp., $f^{-1}(0)$) be the set of 
$x \in \{0,1\}^n$ such that $f(x)=1$ (resp., $f(x)=0$).
For each Boolean function $f\colon \{0,1\}^n \to \{0,1\}$, 
we define the relation $T_f$ by 
\begin{equation*}
T_f = \{(x,y,i) \in  f^{-1}(1) \times f^{-1}(0) \times \{1,\ldots,n\} \mid x_i \neq y_i\}. 
\end{equation*}
(In order to avoid triviality, we assume $f^{-1}(1)\neq \emptyset$
and $f^{-1}(0)\neq \emptyset$.)
We are now ready to show the characterization of 
the size of a smallest formula presented by Karchmer and Wigderson~\cite{KW90}. 

\begin{theorem}[Karchmer and Wigderson~\cite{KW90}]
For each Boolean function $f$, 
\begin{equation*}
C^P(T_f)=L(f).
\end{equation*}
\end{theorem}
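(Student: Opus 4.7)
I would prove the equality by establishing the two inequalities separately, via a direct structural correspondence between formulas computing $f$ and recursive partitions of $M_{T_f}$ into monochromatic rectangles. The bijection sends a leaf labelled by a literal to a monochromatic leaf of the protocol tree, an internal vertex labelled $\vee$ to a row partition, and an internal vertex labelled $\wedge$ to a column partition; since both trees are rooted binary with the same number of leaves, sizes are preserved.

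For $C^P(T_f) \leq L(f)$, I would induct on the structure of a smallest formula $F$ for $f$. In the base case $F$ is a single literal $x_i$ or $\neg x_i$, and then $x_i$ is constant on $f^{-1}(1)$ and takes the opposite constant value on $f^{-1}(0)$, so the whole matrix $f^{-1}(1) \times f^{-1}(0)$ is a single monochromatic rectangle with colour $i$. For $F = F_1 \vee F_2$ computing $g_1 \vee g_2$, I would use $f^{-1}(1) = g_1^{-1}(1) \cup g_2^{-1}(1)$ and $f^{-1}(0) = g_1^{-1}(0) \cap g_2^{-1}(0)$ to split the rows into $X_1 = g_1^{-1}(1) \cap f^{-1}(1)$ and $X_2 = f^{-1}(1) \setminus X_1 \subseteq g_2^{-1}(1)$, then restrict the recursive partitions inductively obtained from $F_1$ and $F_2$ to the corresponding sub-rectangles (dropping any empty pieces). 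The $\wedge$ case is symmetric, partitioning the columns using $f^{-1}(0) = g_1^{-1}(0) \cup g_2^{-1}(0)$.

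For $L(f) \leq C^P(T_f)$, I would start from an optimal recursive monochromatic partition with tree $\mathcal{T}$ and build a formula bottom-up: a leaf rectangle with colour $i$ becomes the literal $x_i$ if $x_i = 1$ on its row set (equivalently $y_i = 0$ on its column set) and $\neg x_i$ otherwise, a row-split vertex becomes the $\vee$ of its children's formulas, and a column-split vertex becomes an $\wedge$. The literal choice is well defined because in any monochromatic rectangle with colour $i$ the coordinate $x_i$ must be constant on the rows (otherwise two rows would force contradictory column values) and similarly for $y_i$. The number of literals in the resulting formula equals the number of leaves of $\mathcal{T}$, namely $C^P(T_f)$.

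The main obstacle will be verifying that the formula extracted from $\mathcal{T}$ actually computes $f$. The cleanest route is to maintain the inductive invariant that the sub-formula attached to a vertex corresponding to rectangle $X' \times Y'$ evaluates to $1$ on every $x \in X'$ and to $0$ on every $y \in Y'$. The invariant holds at the leaves by the monochromaticity observation above; the $\vee$ step follows because the rows are partitioned and each child's invariant covers its own half, while the $\wedge$ step follows symmetrically. Applied at the root, which corresponds to $f^{-1}(1) \times f^{-1}(0)$, the invariant yields that the resulting formula computes $f$ on all relevant inputs, and hence $L(f) \leq C^P(T_f)$.
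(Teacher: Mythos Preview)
The paper does not give a proof of this statement at all: Theorem~1 is quoted from Karchmer and Wigderson~\cite{KW90} and used as background, so there is nothing to compare your argument against inside the present paper. What you have sketched is precisely the standard proof of the Karchmer--Wigderson theorem (leaves $\leftrightarrow$ literals, $\vee \leftrightarrow$ row splits, $\wedge \leftrightarrow$ column splits, with the ``$1$ on the rows, $0$ on the columns'' invariant), and your outline is correct.

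Two minor points are worth tightening if you write it out in full. First, in the direction $C^P(T_f)\le L(f)$ your induction appeals to $T_{g_i}$ for the subformulas, but a subformula of a minimal formula can still compute a constant function at intermediate levels, in which case $T_{g_i}$ is undefined under the paper's standing assumption; the clean fix is to phrase the induction directly as: for every formula $F$ and every rectangle $X'\times Y'$ with $F\equiv 1$ on $X'$ and $F\equiv 0$ on $Y'$, there is a recursive monochromatic partition of $X'\times Y'$ with at most $|F|$ leaves. Second, when you ``restrict and drop empty pieces'' you should note that collapsing a node with a single surviving child keeps the tree binary and does not increase the number of leaves, so the size bound is preserved. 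With these clarifications the argument is complete and matches the original reference.
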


\subsection{The quasi-additive bound}
\label{section:quasi}

Here we introduce the {\it quasi-additive bound} presented by 
Ueno~\cite{U10}. Suppose that we are given a relation $T$ on 
$(X,Y,Z)$. We denote by $C_T$ the set of cells of $M_T$, i.e., $C_T=X \times Y$. 
Let $\mathcal{R}(T)$ be the set of  
rectangles of $M_T$, and let $\mathcal{M}(T)$ be the 
set of monochromatic  rectangles of $M_T$.   
For each $R \in \mathcal{R}(T)$, 
we denote by $\mathcal{P}(R)$ the set of partitions of $R$. 
Then, we consider the following linear programming for 
$\phi \in \mathbb{R}^{C_T}$ and 
$\psi \in \mathbb{R}^{C_T\times \mathcal{R}(T)}$. 
The objective is to 
maximize 
\begin{equation*}
\sum_{c \in C_T}\phi_c
\end{equation*}
under the constraints that 
\begin{equation*}
\sum_{c \in R}\phi_c + \sum_{c \in C_T \setminus R}\psi_{c, R} \le 1
\end{equation*}
for all $R \in \mathcal{M}(T)$, and 
\begin{equation*}
\sum_{c \in C_T\setminus V}\psi_{c, V} + \sum_{c \in C_T \setminus W}\psi_{c, W} 
\ge 
\sum_{c \in C_T\setminus R}\psi_{c, R}
\end{equation*}
for all $R \in \mathcal{R}(T)$ and $\{V,W\}\in \mathcal{P}(R)$. 
We denote by ${\bf LP}(T)$ this linear programming.
Let ${\bf QA}(T)$ be the optimal objective value of ${\bf LP}(T)$, 
and it is called the quasi-additive bound. 
Although ${\bf LP}(T)$ can be seen as a simple extension of the linear programming
for computing the rectangle bound, 
Ueno~\cite{U10} showed the following surprising result. 
\begin{theorem}[Ueno~\cite{U10}] \label{theorem:U10}
For each relation $T$, 
\begin{equation*}
{\bf QA}(T) = C^P(T),
\end{equation*} 
which implies that ${\bf QA}(T_f) = L(f)$
for each Boolean function $f$. 
\end{theorem}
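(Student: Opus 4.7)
The plan is to reduce $\mathbf{LP}(T)$, via two substitutions, to a transparent LP in one real variable per rectangle; equality with $C^P(T)$ then follows immediately from the recursive definition of the protocol partition number. First, every constraint of $\mathbf{LP}(T)$ depends on $\psi$ only through the scalar $\Psi_R := \sum_{c \in C_T \setminus R}\psi_{c,R}$. I would therefore reformulate the LP with the $\Psi_R$ ($R \in \mathcal{R}(T)$) as the sole $\psi$-type variables, and note that $\Psi_{M_T}=0$ is automatic because the defining sum is empty. Next, I would apply the change of variables $\mu_R := \Psi_R + \sum_{c \in R}\phi_c$. Since $\{V,W\}$ partitioning $R$ means $\sum_{c \in V}\phi_c + \sum_{c \in W}\phi_c = \sum_{c \in R}\phi_c$, the monochromatic constraint becomes $\mu_R \le 1$ and the partition constraint becomes $\mu_V + \mu_W \ge \mu_R$, both independent of $\phi$; the forced identity $\Psi_{M_T}=0$ rewrites as $\mu_{M_T} = \sum_c \phi_c$, so the objective is exactly $\mu_{M_T}$. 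Consequently $\mathbf{LP}(T)$ is equivalent to maximizing $\mu_{M_T}$ subject to $\mu_R \le 1$ for $R \in \mathcal{M}(T)$ and $\mu_V + \mu_W \ge \mu_R$ for every $R \in \mathcal{R}(T)$ and $\{V,W\} \in \mathcal{P}(R)$.

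I would then match this reduced LP with $C^P(T)$ in both directions. For $\mathbf{QA}(T) \le C^P(T)$, fix an optimal recursive partition of $M_T$ into monochromatic rectangles $R_1,\dots,R_k$ with $k = C^P(T)$ together with its binary tree representation; iterating $\mu_V + \mu_W \ge \mu_R$ from the root to the leaves gives $\mu_{M_T} \le \sum_i \mu_{R_i} \le k$. For $\mathbf{QA}(T) \ge C^P(T)$, I would exhibit the feasible point $\mu_R := C^P(T|_R)$, where $T|_R$ denotes the relation obtained by restricting $T$ to the submatrix indexed by $R$. Feasibility is clear: $C^P(T|_R) = 1$ whenever $R$ is monochromatic, and $C^P(T|_R) \le C^P(T|_V) + C^P(T|_W)$ for every partition $\{V,W\}$ of $R$, since concatenating optimal protocols across the split yields one for $T|_R$. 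This point attains $\mu_{M_T} = C^P(T)$, completing the lower bound.

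The main obstacle is locating the substitution $\mu_R = \Psi_R + \sum_{c \in R}\phi_c$. Before it is found, $\mathbf{LP}(T)$ looks entangled: the $\phi$ and $\psi$ variables mix in the monochromatic constraint, and the invisible boundary condition $\Psi_{M_T}=0$ forces one to control the objective indirectly through a sum of $\Psi_{R_i}$ along a recursive partition tree. Once the new variable is introduced, $\phi$ decouples entirely from the constraints, the partition constraint telescopes by $V \cup W = R$, and the reduced LP becomes essentially a literal restatement of the recursive characterization of $C^P$, whence the match with $C^P(T)$ is immediate.
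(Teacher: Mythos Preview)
The paper does not prove this theorem; it is quoted from Ueno~\cite{U10} as a black box and then \emph{used} to derive the corollaries. Your argument, by contrast, is a correct self-contained proof. The substitutions $\Psi_R=\sum_{c\notin R}\psi_{c,R}$ and then $\mu_R=\Psi_R+\sum_{c\in R}\phi_c$ do collapse $\mathbf{LP}(T)$ to the one-variable-per-rectangle LP you describe: every $\Psi_R$ with $R\neq C_T$ ranges over all of $\mathbb{R}$ (so no feasibility is lost in the first reduction), the identity $\sum_{c\in V}\phi_c+\sum_{c\in W}\phi_c=\sum_{c\in R}\phi_c$ eliminates $\phi$ from every constraint, and the forced $\Psi_{C_T}=0$ turns the objective into $\mu_{C_T}$, after which $\phi$ may be chosen freely to realize any target value. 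The match with $C^P(T)$ is then exactly as you say: a tree telescope gives the upper bound, and $\mu_R=C^P(T|_R)$ is feasible because a first split of $R$ into $\{V,W\}$ followed by optimal recursive partitions of each piece is itself a recursive partition of $R$.

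In effect you have shown that $\mathbf{LP}(T)$ is, up to a change of variables, the subadditive-rectangle-measure formulation of $C^P$ in the spirit of~\cite{HJKP10}. The paper's own work heads in a different direction: it builds an integer program $\mathbf{PN}(T)$, cites~\cite{Uphd} for the fact that $\mathbf{LP}(T)$ is the dual of its LP relaxation, proves via Lemmas~\ref{lemma1} and~\ref{lemma2} that $\mathbf{PN}(T)$ computes $C^P(T)$, and then \emph{invokes} Theorem~\ref{theorem:U10} to conclude zero integrality gap. Your route is shorter and more direct for the statement at hand; the paper's route buys an explicit IP formulation whose LP relaxation is provably tight.
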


\section{Main Results}

In this section, we use the same notations for a relation $T$ 
in Section~\ref{section:quasi}. 
For a relation $T$, let $\Gamma(T)$ be the set of $(R, P)$ such that 
$R \in \mathcal{R}(T)$ and $P \in \mathcal{P}(R)$, and 
we define the integer programming ${\bf PN}(T)$
for $x \in \mathbb{N}^{\mathcal{M}(T)}$ and 
$y \in \mathbb{N}^{\Gamma(T)}$ as follows. The objective 
is to minimize 
\begin{equation*}
\sum_{R\in \mathcal{M}(T)}x_{R}
\end{equation*}
under the constraints that  
\begin{equation} \label{eq1:ip}
\sum_{R \in \mathcal{M}(T) \colon c \in R} x_{R}=1
\end{equation}
for all $c \in C_T$ and 
\begin{equation} \label{eq2:ip}
\sum_{V\in \mathcal{R}(T)}
\sum_{\substack{P \in \mathcal{P}(V)\\\colon R \in P}} y_{V,P}
=
\left\{
\begin{array}{ll}
\displaystyle{\sum_{P \in \mathcal{P}(R)}y_{R, P}} + x_{R}, & \mbox{ if } R \in \mathcal{M}(T),
\\
\displaystyle{\sum_{P \in \mathcal{P}(R)}y_{R, P}}, & \mbox{ otherwise},
\end{array}
\right.
\end{equation}
for all $R \in \mathcal{R}^{\ast}(T)$, where 
$\mathcal{R}^{\ast}(T) = \mathcal{R}(T)\setminus \{C_T\}$. 
Ueno~\cite{Uphd} shown that the dual problem of the linear programming 
relaxation of ${\bf PN}(T)$ is equivalent to ${\bf LP}(T)$. 
Thus, in order to prove the main result, it suffices to show 
the following theorem.

\begin{theorem} \label{main_theorem}
For each relation $T$, the integer programming ${\bf PN}(T)$ 
computes the protocol partition number of $M_T$. 
\end{theorem}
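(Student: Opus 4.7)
My plan is to prove the two matching inequalities between the optimum of ${\bf PN}(T)$ and $C^P(T)$. For the direction $\le$, I would take an optimal recursive partition $\mathcal{R}$ together with its rooted binary tree representation $\mathcal{T}$ and set $x_R = 1$ iff $R$ is a leaf of $\mathcal{T}$, and $y_{V,P} = 1$ iff $V$ is an internal vertex of $\mathcal{T}$ whose two children realise the partition $P \in \mathcal{P}(V)$ (all remaining coordinates are $0$). Constraint~(\ref{eq1:ip}) is then the statement that the leaves of $\mathcal{T}$ tile $C_T$. For constraint~(\ref{eq2:ip}) at $R \neq C_T$, no rectangle can appear twice in $\mathcal{T}$ (along a root-to-leaf path rectangles strictly shrink because partition parts are non-empty, and across sibling subtrees they are disjoint); hence if $R$ appears in $\mathcal{T}$, it does so as the child of exactly one parent partition (the LHS of~(\ref{eq2:ip}) equals $1$) and is either a leaf ($x_R = 1$) or an internal vertex ($\sum_P y_{R,P} = 1$), matching the RHS. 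The objective value is $|\mathcal{R}| = C^P(T)$.

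For the direction $\ge$, I would take any feasible integer $(x,y)$. Constraint~(\ref{eq1:ip}) together with $x \ge 0$ forces $x_R \in \{0,1\}$, so $\{R : x_R = 1\}$ is a partition of $C_T$ into $\sum_R x_R$ monochromatic rectangles; the task is to exhibit a binary tree witnessing that this partition is \emph{recursive}. For this I would view $y$ as an integer hyperflow: one unit of $y_{V,P}$ is a hyperedge sending one unit of flow from $V$ to each of the two rectangles in $P$. Constraint~(\ref{eq2:ip}) is precisely the conservation law ``incoming $=$ outgoing $+$ absorption via $x_R$ (when $R \in \mathcal{M}(T)$)'' at every $R \neq C_T$, and the hypergraph is acyclic because children of any partition are strictly smaller rectangles. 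A standard flow-decomposition induction on the inclusion order then writes $(x,y)$ as a sum of $N$ binary trees rooted at $C_T$ with monochromatic leaves, where each monochromatic $R$ appears as a leaf in the forest exactly $x_R$ times and $N$ is the total outflow plus absorption at $C_T$.

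It remains to pin down $N = 1$. Inside any tree produced by the decomposition the leaves partition $C_T$ (routine induction: the leaves below a vertex labelled $V$ partition $V$), so every cell is covered by exactly $N$ leaves across the forest, giving $\sum_{R \ni c} x_R = N$ for every $c$; constraint~(\ref{eq1:ip}) then forces $N = 1$. The single resulting tree is a recursive partition with $\sum_R x_R$ monochromatic leaves, which yields $C^P(T) \le \sum_R x_R$. I expect the main technical obstacle to be the flow-decomposition induction itself---peeling off one tree at a time while keeping the residual $(x',y')$ non-negative integer and constraint-feasible; the remaining arguments are essentially bookkeeping.
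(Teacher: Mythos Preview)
Your proposal is correct. The $\le$ direction is exactly the paper's Lemma~\ref{lemma1}; you even make explicit the point (which the paper leaves implicit) that no rectangle can label two distinct vertices of $\mathcal{T}$, so the definitions of $x$ and $y$ are well posed.

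For the $\ge$ direction your route differs from the paper's. The paper (Lemma~\ref{lemma2}) argues by induction on $|x|$: it locates an inclusion-minimal $(R',P')$ with $y_{R',P'}>0$, observes via \eqref{eq2:ip} that both parts of $P'$ must be monochromatic with $x$-value $1$, and then \emph{modifies the relation} $T$ so that $R'$ becomes monochromatic; it merges the two leaves into $R'$, decrements $y_{R',P'}$, and applies the induction hypothesis to the smaller instance on the new relation $T'$, finally hanging $V',W'$ under $R'$ in the resulting tree. Your approach instead keeps $T$ fixed and treats \eqref{eq2:ip} as a conservation law for a hyperflow on the inclusion-DAG of rectangles: you peel off root-$C_T$ binary trees one at a time and then use \eqref{eq1:ip} only at the end to force the number $N$ of trees to be $1$. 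This cleanly separates the roles of the two constraint families (\eqref{eq2:ip} alone yields a forest decomposition, \eqref{eq1:ip} pins it to a single tree) and avoids the somewhat artificial device of changing the underlying relation. The cost is that you must verify the tree-peeling step; this goes through because (i) along any tree all labels are distinct rectangles (strict shrinking on root-to-leaf paths, disjointness across sibling subtrees), so each coordinate of $(x,y)$ is decremented at most once per tree, and (ii) at every non-root node reached via a positive parent edge, \eqref{eq2:ip} guarantees positive outflow-plus-absorption in the current residual, so the greedy top-down construction never gets stuck. The paper's ``minimal $(R',P')$'' argument and your top-down peeling are really two ways of exploiting the same acyclicity.
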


Theorem~\ref{main_theorem} clearly follows from the 
following Lemmas~\ref{lemma1} and \ref{lemma2}. 
We say that $x \in \mathbb{N}^{\mathcal{M}(T)}$ 
is {\it feasible} to ${\bf PN}(T)$ 
if there exists $y \in \mathbb{N}^{\Gamma(T)}$ such that 
$(x,y)$ satisfies \eqref{eq1:ip} and \eqref{eq2:ip}. 
Notice that every element of 
$x \in \mathbb{N}^{\mathcal{M}(T)}$
which is feasible to ${\bf PN}(T)$ is 
$0$ or $1$ by the constraint \eqref{eq1:ip}. 

\begin{lemma} \label{lemma1}
Suppose that we are given a relation $T$ and 
a set $\mathcal{M}'$ of disjoint monochromatic  
rectangles of $\mathcal{M}(T)$ 
which recursively partitions $M_T$. Define $x \in \mathbb{N}^{\mathcal{M}(T)}$
by 
\begin{equation*}
x_{R}=
\left\{
\begin{array}{ll}
1, & R \in \mathcal{M}',\\
0, & \mbox{otherwise},
\end{array}
\right.
\end{equation*}
for each $R \in \mathcal{M}(T)$. 
Then, $x$ is feasible to ${\bf PN}(T)$. 
\end{lemma}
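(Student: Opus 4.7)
The plan is to read off $y$ directly from the rooted binary tree representation whose existence is guaranteed by the hypothesis that $\mathcal{M}'$ recursively partitions $M_T$. Specifically, for each non-leaf vertex $v$ of this tree I would let $R_v$ denote its associated rectangle and let $P_v \in \mathcal{P}(R_v)$ denote the partition whose parts are the two rectangles associated with the children of $v$, and set $y_{R_v, P_v} = 1$. All other coordinates of $y$ would be set to $0$.

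With this $y$ in hand, constraint \eqref{eq1:ip} is immediate: since $\mathcal{M}'$ is a disjoint cover of $M_T$, each cell $c \in C_T$ lies in exactly one $R \in \mathcal{M}'$, which is the unique monochromatic rectangle with $x_R = 1$ containing $c$.

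The heart of the argument is constraint \eqref{eq2:ip}. For fixed $R \in \mathcal{R}^{\ast}(T)$, the left-hand side counts, weighted by $y$, how many times $R$ appears as a part of some activated partition; by construction this equals the number of non-root vertices of the tree whose associated rectangle is $R$, and because the tree is a tree this number is $1$ if $R$ appears as a non-root vertex of the tree and $0$ otherwise. I would then do a case split: (i) $R$ does not appear in the tree, (ii) $R$ appears as an internal vertex, (iii) $R$ appears as a leaf. In (i) the right-hand side is $0$ since no partition of $R$ is activated and, if $R \in \mathcal{M}(T)$, then $R \notin \mathcal{M}'$ forces $x_R = 0$. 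In (ii) exactly one term $y_{R,P}$ equals $1$, whereas $x_R = 0$ because an internal vertex is not a leaf and hence $R \notin \mathcal{M}'$. In (iii) no partition of $R$ is activated but $R \in \mathcal{M}'$ forces $x_R = 1$. In all three cases both sides equal the expected value, which is the number of non-root occurrences of $R$ in the tree.

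The main subtlety I expect is the bookkeeping in case (ii) when $R$ happens to lie in $\mathcal{M}(T)$: the ``if $R \in \mathcal{M}(T)$'' branch of \eqref{eq2:ip} picks up the extra term $x_R$, and one must rely on the observation that a monochromatic rectangle that is subdivided further by $\mathcal{M}'$ is, by definition, not a leaf of the tree and therefore satisfies $x_R = 0$. The exclusion of $C_T$ from $\mathcal{R}^{\ast}(T)$ also matters here: the root is the unique vertex that is never a child in any partition, so without this exclusion the left-hand side would be $0$ at $R = C_T$ while the right-hand side would be at least $1$.
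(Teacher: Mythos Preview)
Your proposal is correct and follows essentially the same approach as the paper: both construct $y$ from the rooted binary tree representation by setting $y_{R,P}=1$ exactly when $R$ is a non-leaf vertex whose children form the partition $P$, and then verify \eqref{eq2:ip} by the same case analysis on whether $R$ appears in the tree and, if so, whether as a leaf or an internal vertex. The only minor imprecision is the phrase ``because the tree is a tree'' when arguing that each rectangle occurs at most once as a vertex; the real reason is that distinct vertices of a recursive partition correspond to distinct rectangles (an ancestor strictly contains its descendants, and incomparable vertices are disjoint), but this does not affect the correctness of your argument.
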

\begin{proof}
Since $\mathcal{M}'$ is a set of disjoint monochromatic 
rectangles which 
partitions $M_T$, $x$ clearly satisfies \eqref{eq1:ip}. 
Thus, it suffices to show that there exists 
$y \in \mathbb{N}^{\Gamma(T)}$
such that $(x,y)$ satisfies \eqref{eq2:ip}. 

Let $\mathcal{T}$ be a rooted binary
tree representation of $\mathcal{M}'$.  
In the sequel, we do not distinguish between 
a vertex $v$ of $\mathcal{T}$ and the  
rectangle to which $v$ corresponds. 
Define $y\in \mathbb{N}^{\Gamma(T)}$ by 
\begin{equation*}
y_{R,P}=
\left\{
\begin{array}{ll}
1, & \mbox{%
\begin{minipage}{68mm}
if $R$ is a non-leaf vertex of $\mathcal{T}$ and 
the children of $R$ consist of a partition $P$,
\end{minipage}
}\vspace{2mm}\\
0, & \mbox{otherwise},
\end{array}
\right.
\end{equation*}
for each $(R,P) \in \Gamma(T)$. 
Then, we show that $(x,y)$ satisfies \eqref{eq2:ip}. 

Let $R$ be a rectangle of $\mathcal{R}^{\ast}(T)$
which is not contained in $\mathcal{T}$. 
In this case, it follows from the definition of $y$ that 
$y_{R,P}=0$ for all $P \in \mathcal{P}(R)$ and
$y_{V,P}=0$ for all $(V,P) \in \Gamma(T)$ 
such that $R \in P$. 
Furthermore, even if $R \in \mathcal{M}(T)$, 
$x_{R}=0$ follows from $R \notin \mathcal{M}'$. 
These imply that \eqref{eq2:ip} satisfies.

Let $R$ be a rectangle of $\mathcal{R}^{\ast}(T)$ 
which is contained in $\mathcal{T}$.
Since $R\neq C_T$, $R$ is not the root of $\mathcal{T}$.
Hence, there exist the parent $V'$ and the sibling $S$  
of $R$ in $\mathcal{T}$. 
Using the notation $P'=\{R,S\}$, it follows from 
the definition of $y$ that $y_{V',P'}=1$ and 
$y_{V,P}=0$ for all $(V,P) \in \Gamma(T)$
such that $R \in P$ and $(V,P) \neq (V',P')$. 
Thus, the left-hand side of \eqref{eq2:ip} is equal to $1$, and 
it suffices to show that the right-hand side of 
\eqref{eq2:ip} is equal to $1$. 

If $R$ is a leaf of $\mathcal{T}$ (i.e., $R \in \mathcal{M}'$), 
$y_{R,P}=0$ for all $P \in \mathcal{P}(R)$ and 
$x_{R}=1$. Thus, the right-hand side of 
\eqref{eq2:ip} is equal to $1$.
In the case where $R$ is a non-leaf vertex of $\mathcal{T}$,
$x_{R}=0$ follows from $R \not\in \mathcal{M}'$. 
Let $P''$ be a partition of $R$ which consist of the children of $R$ in $\mathcal{T}$. 
Then, it follows from the definition of $y$ that 
$y_{R, P''}=1$ and $y_{R,P}=0$ for all 
$P \in \mathcal{P}(R) \setminus \{P''\}$.
These facts imply that the right-hand side of 
\eqref{eq2:ip} is equal to $1$.
This completes the proof.  
\end{proof}

\begin{lemma} \label{lemma2}
Suppose that we are given a relation $T$ and 
$x \in \mathbb{N}^{\mathcal{M}(T)}$ which is 
feasible to ${\bf PN}(T)$. 
Define $\mathcal{M}_x$ by 
\begin{equation*}
\mathcal{M}_x =
\{R \in \mathcal{M}(T)\mid x_R=1\}. 
\end{equation*}
Then, $\mathcal{M}_x$
is a set of disjoint monochromatic 
 rectangles of $\mathcal{M}(T)$ 
which recursively partitions $M_T$. 
\end{lemma}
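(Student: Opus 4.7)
The plan is to dispatch the easy parts of the claim with \eqref{eq1:ip} and then use the auxiliary variable $y$ witnessing the feasibility of $x$ to construct the required binary tree. Equation \eqref{eq1:ip} immediately tells us that every cell $c \in C_T$ lies in exactly one $R \in \mathcal{M}_x$, so $\mathcal{M}_x$ is a collection of pairwise disjoint monochromatic rectangles whose union equals $C_T$; the only non-trivial task is thus to exhibit a rooted binary tree representation of $\mathcal{M}_x$.

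To build this tree I will read $y$ as a combinatorial object. For each $(V,P) \in \Gamma(T)$ I introduce $y_{V,P}$ \emph{internal tokens} labeled $V$, each carrying two child slots labeled with the two rectangles of $P$; for each $R \in \mathcal{M}_x$ I introduce one \emph{leaf token} labeled $R$. Constraint \eqref{eq2:ip} states precisely that, for every $V \in \mathcal{R}^{\ast}(T)$, the number of child slots labeled $V$ equals the number of tokens labeled $V$, so a label-preserving bijection between the child slots and the tokens whose labels lie in $\mathcal{R}^{\ast}(T)$ exists; the tokens labeled $C_T$ remain unmatched, because no partition in $\Gamma(T)$ contains $C_T$ as an element. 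Interpreting every matched pair as a parent--child edge turns this matching into a forest $F$ whose roots are exactly the $C_T$-labeled tokens, whose internal nodes realize the splits prescribed by $y$, and whose leaves are in bijection with $\mathcal{M}_x$.

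The central step is to show that $F$ is a single tree. A short induction on subtrees shows that the leaves of any tree in $F$ partition the rectangle labeling its root; hence the leaf set of every tree in $F$ is a partition of $C_T$ consisting of rectangles in $\mathcal{M}_x$. Since $\mathcal{M}_x$ is itself a partition of $C_T$, no proper non-empty subset of $\mathcal{M}_x$ can cover $C_T$, so each tree's leaf set must coincide with the whole of $\mathcal{M}_x$. Because the leaf sets of distinct trees are disjoint and $\mathcal{M}_x \neq \emptyset$, $F$ consists of exactly one tree, and that tree, rooted at $C_T$ with leaf set $\mathcal{M}_x$, is the desired rooted binary tree representation. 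I expect this uniqueness argument to be the main obstacle: the matching supplied by \eqref{eq2:ip} on its own only guarantees a forest whose roots are all labeled $C_T$, and one has to bring in the global partition property coming from \eqref{eq1:ip} to collapse the forest to a single tree.
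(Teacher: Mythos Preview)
Your argument is correct, and it differs substantially from the paper's. The paper proves Lemma~\ref{lemma2} by induction on $|x|$: it locates a pair $(R',P')$ with $y_{R',P'}>0$ whose two parts are both leaves of the intended tree (this is Claim~\ref{claim1}), then \emph{modifies the relation} $T$ to a new relation $T'$ in which $R'$ itself becomes monochromatic, checks that suitably adjusted vectors $(x',y')$ with $|x'|=|x|-1$ are feasible for ${\bf PN}(T')$, invokes the induction hypothesis, and finally glues the two removed leaves back under $R'$. Your approach instead reads the whole tree off from $y$ in one shot: the flow-balance interpretation of \eqref{eq2:ip} gives a label-preserving matching between ``child slots'' and ``tokens'', and acyclicity is automatic because labels strictly shrink along child edges (a point worth stating explicitly). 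The place where your proof genuinely uses \eqref{eq1:ip} is the final collapse from a forest to a single tree, and your argument there (each tree's leaves already partition $C_T$, hence must exhaust $\mathcal{M}_x$) is clean. Compared with the paper, your route is more direct and avoids the somewhat artificial device of altering the underlying relation; the paper's inductive proof, on the other hand, is closer in spirit to an algorithm that peels off the tree one split at a time.
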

\begin{proof}
For any relation $T$ and 
$x \in \mathbb{N}^{\mathcal{M}(T)}$ which is 
feasible to ${\bf PN}(T)$, it follows from 
\eqref{eq1:ip} that 
$\mathcal{M}_x$ is a set of disjoint monochromatic 
rectangles of $\mathcal{M}(T)$ 
which partitions $M_T$. Thus, what remains is 
to show that it {\it recursively} partitions $M_T$. 

For a relation $T$ and $x \in \mathbb{N}^{\mathcal{M}(T)}$, 
we say that $(T,x)$ is {\it eligible} if $x$ is feasible to ${\bf PN}(T)$. 
By induction on $|x|$, we show that the lemma holds for all eligible $(T,x)$.  
For all eligible $(T,x)$ such that 
$|x|=1$, the lemma holds 
since $\{M_T\}$ is 
a set of monochromatic  rectangles 
recursively partitions $M_T$. 

Assuming that the lemma holds for all eligible $(T,x)$ such that 
$|x| = k\ge 1$, we consider 
an eligible $(T,x)$ such that $|x| = k+1$. 
Let $y$ be a vector in $\mathbb{N}^{\Gamma(T)}$ 
such that $(x,y)$ satisfies \eqref{eq1:ip} and 
\eqref{eq2:ip}. 
For proving the lemma by induction, we first 
show the following claim. 
\begin{claim} \label{claim1}
There exists 
$(R', P') \in \Gamma(T)$ such that 
\begin{enumerate}
\item
every  rectangle in $P'$ is monochromatic, 
\item
$x_{V'}=1$ for all $V' \in P'$, and
\item
$y_{R', P'}>0$. 
\end{enumerate}
\end{claim}
\begin{proof}
Since $|x|\ge 2$, 
there exists $R \in \mathcal{M}(T)$ such that $x_{R}=1$ and 
$R \neq C_T$.
Hence, by \eqref{eq2:ip} 
there exists $(R,P) \in \Gamma(T)$ such that $y_{R,P}>0$. 
Let $(R', P')$ be a pair of $\Gamma(T)$ 
such that $y_{R',P'}>0$ and $|R'|$ is minimum.
Then, we can show that $(R', P')$ satisfies 
the above conditions as follows. 
If $V' \in P'$ is not monochromatic or $x_{V'}=0$,
it follows from \eqref{eq2:ip} that $y_{V',P}>0$ for some 
$P \in \mathcal{P}(V')$, which 
contradicts $|R'|$ is minimum. This completes the proof.
\end{proof}

Let $P'=\{V',W'\}$ be a pair of $\Gamma(T)$ satisfying the 
conditions of Claim~\ref{claim1}. 
Since $V'$ is monochromatic, there exists some index $i$ which 
every cell of $V'$ contains. 
Here we consider a new relation $T'$ obtained from $T$ by 
adding an index $i$ to the entry of every cell of $W'$.
Then, we define $x' \in \mathbb{N}^{\mathcal{M}(T')}$ by 
\begin{equation*}
x'_R = 
\left\{
\begin{array}{ll}
1, & \mbox{if } R = R', \\
0, & \mbox{if } R \in \{V',W'\}, \\
x_R, & \mbox{if } R \in \mathcal{M}(T) \setminus \{R', V',W'\}, \\
0, & \mbox{otherwise}, 
\end{array}
\right.
\end{equation*}
for each $R \in \mathcal{M}(T')$. 
Furthermore, we define $y \in \mathbb{N}^{\Gamma(T')}$ by 
\begin{equation*}
y'_{R,P} = 
\left\{
\begin{array}{ll}
y_{R,P} - 1, & \mbox{if } (R,P) = (R',P'), \\
y_{R,P}, & \mbox{otherwise}, 
\end{array}
\right.
\end{equation*}
for each $(R,P) \in \Gamma(T')$. 
Notice that $y'_{R',P'} \ge 0$
follows from $y_{R', P'}>0$.
Since $R' \notin \mathcal{M}(T)$ or 
$x_{R'}=0$ by \eqref{eq1:ip} and $x_{V'}=1$, 
we have $|x'|=k$.  
Hence, in order to 
use the induction hypothesis, we need the 
following claim. 
\begin{claim}
$(x',y')$ satisfies 
\eqref{eq1:ip} and \eqref{eq2:ip} for $T'$.
\end{claim}
\begin{proof}
Since \eqref{eq1:ip} is satisfied
by the definition of $x'$ and the induction 
hypothesis, we consider the constraint \eqref{eq2:ip}. 
By the definition of $(x', y')$ and 
induction hypothesis, it suffices to consider the constraint for 
$R'$, $V'$ and $W'$. 

First we consider the constraint for $R'$. 
Since $x'_{R'}-x_{R'}=1$ (if $R'$ is not contained in 
$\mathcal{M}(T)$, set $x_{R'}=0$) and 
\begin{equation*}
\sum_{P \in \mathcal{P}(R')}y'_{R',P}
-
\sum_{P \in \mathcal{P}(R')}y_{R',P}=-1, 
\end{equation*}
the right-hand side of \eqref{eq2:ip} does not change. 
Hence, since the left-hand side does not change, \eqref{eq2:ip}
is satisfied. 
Next we consider the constraint for $V'$. 
The left-hand side of \eqref{eq2:ip} decreases by $1$ due to $(R',P')$.
Since $x'_{V'} - x_{V'}=-1$, 
the right-hand side of \eqref{eq2:ip} also decreases by $1$. 
Hence, \eqref{eq2:ip} is satisfies. 
The same argument 
is clearly valid for $V'$. This completes the proof.
\end{proof}

By the induction hypothesis, $\mathcal{M}_{x'}$ recursively 
partitions $M_{T'}$. 
It is not difficult to see that 
we can construct a rooted binary tree 
representation of $\mathcal{M}_x$ by adding 
two vertices $V'$ and $W'$ under $R'$ of 
the rooted binary tree representation of $\mathcal{M}_{x'}$. 
This completes the proof. 
\end{proof}

Together with Theorem~\ref{theorem:U10} 
and the fact that the dual problem of the linear programming 
relaxation of ${\bf PN}(T)$ is equivalent to ${\bf LP}(T)$, 
the following main results of this note hold 
by Theorem~\ref{main_theorem}. 

\begin{corollary} \label{main_corollary}
For each relation $T$, ${\bf LP}(T)$ is the dual problem 
of the linear relaxation of the integer programming ${\bf PN}(T)$ 
for computing the protocol partition number of $M_T$.
\end{corollary}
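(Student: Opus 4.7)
The plan is to obtain Corollary~\ref{main_corollary} by composing two already-available facts rather than by developing any new technique. The corollary asserts two things about ${\bf PN}(T)$: first, that it is a genuine integer programming formulation of the protocol partition number $C^P(T)$; and second, that the linear program ${\bf LP}(T)$ is precisely the dual of its LP relaxation. The first assertion is the content of Theorem~\ref{main_theorem}, which has been established in the preceding pages via Lemmas~\ref{lemma1} and~\ref{lemma2} by placing integer feasible $x$ in bijection with sets of disjoint monochromatic rectangles that recursively partition $M_T$ of size $|x|$. The second assertion is the duality computation of Ueno~\cite{Uphd}, which I would invoke as a black box.

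The proof I would write therefore consists of two short steps. First, apply Theorem~\ref{main_theorem} to conclude that the optimum of ${\bf PN}(T)$ equals $C^P(T)$, i.e.\ that ${\bf PN}(T)$ really is an integer programming formulation for the protocol partition number. Second, apply the statement from~\cite{Uphd} that the LP dual of the continuous relaxation of ${\bf PN}(T)$ coincides with ${\bf LP}(T)$. Taken together, these two statements are exactly what the corollary asserts about ${\bf PN}(T)$, so no further argument is required.

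There is no genuine obstacle at the level of the corollary itself; the substantive difficulty of the note has already been absorbed into Theorem~\ref{main_theorem}. The only point that would require care, if one chose to reproduce Ueno's duality calculation in place of citing~\cite{Uphd}, is to match the two constraint families of the LP relaxation of ${\bf PN}(T)$, one coming from~\eqref{eq1:ip} and one from~\eqref{eq2:ip}, to the dual variables $\phi \in \mathbb{R}^{C_T}$ and $\psi \in \mathbb{R}^{C_T \times \mathcal{R}(T)}$ of ${\bf LP}(T)$, and to verify that the dual objective and dual constraints agree term by term with those displayed in Section~\ref{section:quasi}. Coupled with Theorem~\ref{theorem:U10}, the corollary moreover yields the stronger statement announced in the abstract: the LP relaxation of ${\bf PN}(T)$ attains value ${\bf QA}(T) = C^P(T)$, and therefore ${\bf PN}(T)$ has no integrality gap.
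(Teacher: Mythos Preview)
Your proposal is correct and matches the paper's own argument essentially verbatim: the paper derives Corollary~\ref{main_corollary} simply by combining Theorem~\ref{main_theorem} with Ueno's result from~\cite{Uphd} that the LP dual of the relaxation of ${\bf PN}(T)$ equals ${\bf LP}(T)$, without reproducing the duality calculation. Your closing remark about the absence of an integrality gap is likewise exactly the content of the paper's subsequent corollary.
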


\begin{corollary}
For each relation $T$, there exists no gap between 
the integer programming ${\bf PN}(T)$ 
for computing the protocol partition 
number of $M_T$ and its linear programming relaxation.
\end{corollary}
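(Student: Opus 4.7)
The plan is to chain together the results already established: Theorem~\ref{main_theorem} (that ${\bf PN}(T)$ computes $C^P(T)$), the stated equivalence from Ueno~\cite{Uphd} (that the LP dual of the relaxation of ${\bf PN}(T)$ is ${\bf LP}(T)$), Theorem~\ref{theorem:U10} (that ${\bf QA}(T) = C^P(T)$), and standard strong LP duality. Denote by $\mathrm{OPT}_{\mathrm{IP}}(T)$ and $\mathrm{OPT}_{\mathrm{LP}}(T)$ the optimal values of ${\bf PN}(T)$ and of its linear programming relaxation, respectively.

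The trivial direction $\mathrm{OPT}_{\mathrm{LP}}(T) \le \mathrm{OPT}_{\mathrm{IP}}(T)$ follows because dropping integrality constraints from a minimization can only lower the optimum; by Theorem~\ref{main_theorem}, this upper bound equals $C^P(T)$. For the reverse inequality $\mathrm{OPT}_{\mathrm{LP}}(T) \ge C^P(T)$, I would invoke Corollary~\ref{main_corollary} to identify ${\bf LP}(T)$ with the LP dual of the relaxation of ${\bf PN}(T)$, verify that strong duality applies, and then conclude $\mathrm{OPT}_{\mathrm{LP}}(T) = {\bf QA}(T)$. Theorem~\ref{theorem:U10} then gives ${\bf QA}(T) = C^P(T)$, closing the loop.

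To apply strong duality I need both LPs feasible and bounded. Primal feasibility follows from Lemma~\ref{lemma1} applied to any recursive monochromatic partition of $M_T$ (for instance, the partition into single cells, which is trivially monochromatic and recursively realizable by arbitrarily splitting rows or columns); any integer feasible point is in particular LP feasible. Dual feasibility is immediate from $\phi = 0$, $\psi = 0$, which trivially satisfies every constraint of ${\bf LP}(T)$. The primal is bounded below by $0$ (all $x,y \ge 0$ and the objective is a non-negative sum), and the dual is bounded above by $\mathrm{OPT}_{\mathrm{IP}}(T) = C^P(T)$ by weak duality. Hence strong duality yields the required equality.

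Since essentially all of the substance lies in Theorem~\ref{main_theorem} and the cited Ueno results, I do not foresee any real obstacle to this corollary; the only care required is the short verification that strong LP duality is applicable, which the explicit feasible witnesses above handle directly.
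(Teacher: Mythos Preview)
Your argument is correct and follows exactly the route the paper takes: combine Theorem~\ref{main_theorem}, Corollary~\ref{main_corollary} (equivalently, the result from~\cite{Uphd}), and Theorem~\ref{theorem:U10} via LP duality. The paper states this chain in one sentence without justification; your explicit feasibility witnesses for both the primal and the dual (the single-cell recursive partition and $\phi=\psi=0$, respectively) supply the strong-duality check that the paper leaves implicit, and everything goes through as you describe.
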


%\bibliographystyle{elsarticle-num}% bib style
%\bibliography{ppn}% your bib database

\end{document}